\numberwithin{equation}{section}
\numberwithin{figure}{section}
\theoremstyle{plain}
\newtheorem{thm}{\protect\theoremname}
  \theoremstyle{plain}
  \newtheorem{prop}[thm]{\protect\propositionname}
  \theoremstyle{remark}
  \newtheorem{rem}[thm]{\protect\remarkname}
  \theoremstyle{remark}
  \theoremstyle{plain}
  \newtheorem{lem}[thm]{\protect\lemmaname}
  \theoremstyle{plain}
  \providecommand{\lemmaname}{Lemma}
  \providecommand{\propositionname}{Proposition}
  \providecommand{\remarkname}{Remark}
  \providecommand{\examplename}{Example}
\providecommand{\theoremname}{Theorem}
\providecommand{\definitionname}{Definition}
\newcommand{\be}{\begin{equation}}
\newcommand{\ee}{\end{equation}}
\newcommand{\bea}{\begin{eqnarray}}
\newcommand{\eea}{\end{eqnarray}}
\newcommand{\beaa}{\begin{eqnarray*}}
\newcommand{\eeaa}{\end{eqnarray*}}
\newcommand{\bes}{\begin{subequations}}
\newcommand{\ees}{\end{subequations}}
\newcommand{\EE}{{\mathbb E}}
\newcommand{\RR}{{\mathbb R}}
\newcommand{\NN}{{\mathbb N}}
\newcommand{\PP}{{\mathbb P}}
\newcommand{\cd}{{\mathcal D}}
\newcommand{\cf}{{\mathcal F}}
\newcommand{\cl}{{\mathcal L}}
\newcommand{\ind}{{\bf 1}}
\newcommand{\eps}{\varepsilon}
\DeclareMathOperator{\esssup}{ess\,sup}
\begin{document}
\title{Inversion of Convex Ordering: Local Volatility Does Not Maximize the Price of VIX Futures}

\author{Beatrice Acciaio and Julien Guyon}
\email{b.acciaio@lse.ac.uk}
\address{Department of Statistics, London School of Economics and Political Science}
\email{jguyon2@bloomberg.net}
\address{Quantitative Research, Bloomberg L.P.}
\date{\today}

\begin{abstract}
It has often been stated that, within the class of continuous stochastic volatility models calibrated to vanillas, the price of a VIX future is maximized by the Dupire local volatility model. In this article we prove that this statement is incorrect: we build a continuous stochastic volatility model in which a VIX future is \emph{strictly more expensive} than in its associated local volatility model. More generally, in this model, strictly convex payoffs on a squared VIX are strictly cheaper than in the associated local volatility model. This corresponds to an \emph{inversion of convex ordering} between local and stochastic variances, when moving from instantaneous variances to squared VIX, as convex payoffs on instantaneous variances are always cheaper in the local volatility model. We thus prove that this inversion of convex ordering, which is observed in the SPX market for short VIX maturities, can be produced by a continuous stochastic volatility model. We also prove that the model can be extended so that, as suggested by market data, the convex ordering is preserved for long maturities.
\end{abstract}

\keywords{VIX, VIX futures, stochastic volatility, local volatility, convex order, inversion of convex ordering}

\maketitle

\section{Introduction}

For simplicity, let us assume zero interest rates, repos, and dividends. Let $\cf_t$ denote the market information available up to time $t$. We consider continuous stochastic volatility models on the SPX index of the form
\bea
\frac{dS_t}{S_t} = \sigma_t \, dW_t, \qquad S_0 = s_0, \label{eq:continuous_model_on_S}
\eea
where $W=(W_t)_{t\geq0}$ denotes a standard one-dimensional $(\cf_t)$-Brownian motion, $\sigma=(\sigma_t)_{t\geq0}$ is an $(\cf_t)$-adapted process such that $\int_0^t \sigma_s^2\,ds < +\infty$ a.s. for all $t\ge 0$, and $s_0>0$ is the initial SPX price. By continuous model we mean that the SPX has no jump, while the volatility process $\sigma$ may be discontinuous.
The local volatility function associated to Model (\ref{eq:continuous_model_on_S}) is the function $\sigma_{\mathrm{loc}}$ defined by
\bea
\sigma_{\mathrm{loc}}^2(t,x) := \EE[\sigma_t^2|S_t=x]. \label{eq:def_sigma_loc}
\eea
The associated local volatility model is defined by:
\beaa
\frac{dS^{\mathrm{loc}}_t}{S^{\mathrm{loc}}_t} = \sigma_{\mathrm{loc}}(t,S_t^{\mathrm{loc}})\,dW_t, \qquad S_0^{\mathrm{loc}}=s_0.
\eeaa
From \cite{gyongy}, the marginal distributions of the processes $(S_t)_{t\ge 0}$ and $(S_t^{\mathrm{loc}})_{t\ge 0}$ agree:
\bea
\forall t\ge 0, \quad S_t^{\mathrm{loc}} \overset{(d)}{=} S_t. \label{eq:same_marginals}
\eea

Let $T\ge 0$. By definition, the (idealized) VIX at time $T$ is the implied volatility of a 30 day log-contract on the SPX index starting at $T$. For continuous models (\ref{eq:continuous_model_on_S}), this translates into
\bea
\mathrm{VIX}^2_T = \EE\left[ \frac{1}{\tau}\int_T^{T+\tau} \sigma_t^2 \, dt \middle| \cf_T \right] = \frac{1}{\tau}\int_T^{T+\tau} \EE\left[ \sigma_t^2\middle| \cf_T \right] dt,
\eea
where $\tau = \frac{30}{365}$ (30 days). In the associated local volatility model, since by the Markov property of $(S_t^{\mathrm{loc}})_{t\ge 0}$, $\EE[\sigma_{\mathrm{loc}}^2(t,S_t^{\mathrm{loc}})|\cf_T] = \EE[\sigma_{\mathrm{loc}}^2(t,S_t^{\mathrm{loc}})|S_T^{\mathrm{loc}}]$, the VIX, denoted by $\mathrm{VIX}_{\mathrm{loc},T}$, satisfies
\bea
\mathrm{VIX}_{\mathrm{loc},T}^2 = \frac{1}{\tau}\int_T^{T+\tau} \EE[\sigma_{\mathrm{loc}}^2(t,S_t^{\mathrm{loc}})|S_T^{\mathrm{loc}}] \,dt = \EE\left[\frac{1}{\tau}\int_T^{T+\tau} \sigma_{\mathrm{loc}}^2(t,S_t^{\mathrm{loc}}) \,dt \middle|S_T^{\mathrm{loc}}\right]. \label{eq:VIX2loc}
\eea
Note that $\mathrm{VIX}^2_T$ and $\mathrm{VIX}_{\mathrm{loc},T}^2$ have the same mean:
\bea
\EE\left[\mathrm{VIX}^2_T\right] = \EE\left[\mathrm{VIX}_{\mathrm{loc},T}^2\right] = \EE\left[ \frac{1}{\tau}\int_T^{T+\tau} \sigma_t^2 \, dt \right]. \label{eq:same_mean}
\eea

It has often been stated that, within the class of continuous stochastic volatility models calibrated to vanillas, the price of a VIX future is maximized by Dupire's local volatility model. For example, in a general discussion in the introduction of \cite{phl-demarco} about the difficulty of jointly calibrating a stochastic volatility model to both SPX and VIX smiles, De Marco and Henry-Labord\`ere approximate the VIX by the instantaneous volatility, i.e., $\mathrm{VIX}_T \approx \sigma_T$ and $\mathrm{VIX}_{\mathrm{loc},T} \approx \sigma_{\mathrm{loc}}(T,S_T^{\mathrm{loc}})$, and, using Jensen's inequality and (\ref{eq:same_marginals}), they conclude that ``within [the] class of continuous stochastic volatility models calibrated to vanillas, the VIX future is bounded from above by the Dupire local volatility model'':
\begin{multline*}
\EE[\mathrm{VIX}_T] \approx \EE[\sigma_T] = \EE\left[\sqrt{\sigma_T^2}\right] = \EE\left[\EE\left[\sqrt{\sigma_T^2}\middle|S_T\right]\right] \\ \le \EE\left[\sqrt{\EE\left[\sigma_T^2\middle|S_T\right]}\right] = \EE\left[\sigma_{\mathrm{loc}}(T,S_T)\right] = \EE\left[\sigma_{\mathrm{loc}}(T,S_T^{\mathrm{loc}})\right] \approx \EE\left[\mathrm{VIX}_{\mathrm{loc},T}\right].
\end{multline*}
Similarly, one would conclude that within the class of continuous stochastic volatility models calibrated to vanillas, the price of convex options on the \emph{squared} VIX is \emph{minimized} by the local volatility model: for any convex function $f$, such as the call or put payoff function,
\begin{multline}
\EE\left[f\left(\mathrm{VIX}_T^2\right)\right] \approx \EE\left[f\left(\sigma_T^2\right)\right] = \EE\left[\EE\left[f\left(\sigma_T^2\right)\middle|S_T\right]\right] \\ \ge \EE\left[f\left(\EE\left[\sigma_T^2\middle|S_T\right]\right)\right] = \EE\left[f\left(\sigma_{\mathrm{loc}}^2(T,S_T)\right)\right] = \EE\left[f\left(\sigma_{\mathrm{loc}}^2(T,S_T^{\mathrm{loc}})\right)\right] \approx \EE\left[f\left(\mathrm{VIX}_{\mathrm{loc},T}^2\right)\right].
\label{eq:PHL_DeMarco_f}
\end{multline}
(The (correct) fact that $\EE\left[f\left(\sigma_T^2\right)\right] \ge \EE\left[f\left(\sigma_{\mathrm{loc}}^2(T,S_T^{\mathrm{loc}})\right)\right]$ had already been noticed by Dupire in \cite{dupire2005-jensen}.)

In this article, we prove that these statements are in fact incorrect. Even if 30 days is a relatively short horizon, it cannot be harmlessly ignored. VIX are implied volatilities (of SPX options maturing 30 days later), not instantaneous volatilities. We can actually build continuous stochastic volatility models, i.e., processes $(\sigma_t)_{t\ge 0}$, such that
\bea
\EE[\mathrm{VIX}_T] > \EE\left[\mathrm{VIX}_{\mathrm{loc},T}\right] \label{eq:VIXfuture_vs_VIXfutureloc}
\eea
and, more generally, such that for any strictly convex function $f$,
\bea
\EE\left[f\left(\mathrm{VIX}_T^2\right)\right] < \EE\left[f\left(\mathrm{VIX}_{\mathrm{loc},T}^2\right)\right]. \label{eq:inversion_cvx_ordering}
\eea
(Our counterexample actually works for any $\tau>0$.) Not only do we find one convex function $f$ such that (\ref{eq:inversion_cvx_ordering}) holds, we actually build a model in which  (\ref{eq:inversion_cvx_ordering}) holds for any strictly convex function $f$. Actually, we prove an \emph{inversion of convex ordering}: Despite the fact that $\sigma^2_{\mathrm{loc}}(t,S_t^{\mathrm{loc}})$ is smaller than $\sigma_t^2$ in convex order for all $t\in[T,T+\tau]$ (see (\ref{eq:PHL_DeMarco_f})), $\mathrm{VIX}_{\mathrm{loc},T}^2$ is strictly \emph{larger} than $\mathrm{VIX}_T^2$ in convex order. Interestingly, Guyon \cite{guyon-nyu2017} has reported that for short maturities $T$, the market exhibits this inversion of convex ordering: the distribution of $\mathrm{VIX}_{\mathrm{loc},T}^2$ (computed with the market-implied Dupire local volatility) is strictly \emph{larger} than the distribution of $\mathrm{VIX}_T^2$ (implied from the market prices of VIX options) in convex order.

Guyon \cite{guyon-imperial2018} has shown that, when the spot-vol correlation is large in absolute value, stochastic volatility models with fast enough mean reversion, as well as rough volatility models with small enough Hurst exponent, do exhibit this inversion of convex ordering. When mean reversion increases, the distribution of $\mathrm{VIX}^2_T$ becomes more peaked, while the local volatility flattens but not as fast, and as a result it can be numerically checked that $\mathrm{VIX}^2_T$ is strictly smaller than $\mathrm{VIX}_{\mathrm{loc},T}^2$ in convex order for short maturities. Interestingly enough, these models reproduce another characteristic of the SPX/VIX markets: that for larger maturities (say, 3--5 months) the two distributions become non-rankable in convex order, and for even larger maturities $\mathrm{VIX}^2_T$ seems to become strictly larger than $\mathrm{VIX}_{\mathrm{loc},T}^2$ in convex order, i.e., the inversion of convex ordering vanishes as $T$ increases.
However, it is very difficult to mathematically prove the inversion of convex ordering in these models. In order to get a proof of this inversion, our idea is to choose a more extreme model, in which the volatility process $\sigma$ is such that $(\sigma_t)_{t\in[T,T+\tau]}$ is independent of $\cf_T$, so that $\mathrm{VIX}^2_T$ is almost surely constant, but also such that $\mathrm{VIX}_{\mathrm{loc},T}^2$ is not a.s.~constant. In this case, since these two random variables have the same mean (recall (\ref{eq:same_mean})), $\mathrm{VIX}^2_T$ is strictly smaller than $\mathrm{VIX}_{\mathrm{loc},T}^2$ in convex order, and (\ref{eq:inversion_cvx_ordering}) holds for any strictly convex function $f$. In particular, applying (\ref{eq:inversion_cvx_ordering}) with $f(y)=-\sqrt{y}$ yields (\ref{eq:VIXfuture_vs_VIXfutureloc}).

Clearly, in order for $\mathrm{VIX}_{\mathrm{loc},T}^2$ to be non-constant, the local volatility 
cannot be constant as a function of $S$, $\mathrm{d}t$-a.e. in $[T,T+\tau]$.
There are many ways to achieve this, e.g., through volatility of volatility, and it is easy to numerically verify that $\mathrm{VIX}_{\mathrm{loc},T}^2$ (estimated from (\ref{eq:VIX2loc}), e.g., using kernel regressions) is non-constant. However, the main mathematical difficulty here is to \emph{prove} this result. To this end, we will consider models where the non-constant local volatility can be derived in closed form.

The remainder of the article is structured as follows. In Section \ref{sec:simple} we derive a simple counterexample inspired by \cite{bfs} where Beiglb\"ock, Friz, and Sturm use a similar model to prove that local volatility does not minimize the price of options on realized variance. Then we generalize the counterexample in Section \ref{sec:generalization}. Eventually in Section \ref{sec:term-structure} we explain how the model can be extended so that, as suggested by market data, the convex ordering is preserved for long maturities.

\section{A simple counterexample}\label{sec:simple}

Inspired by \cite{bfs}, we fix $T>0$ and consider the following volatility process:
\bea
\sigma_t = \begin{cases}
\sigma_0 & \text{if } t<T+\frac{\tau}{2} \\
\overline{\sigma} & \text{if } t\ge T+\frac{\tau}{2} \text{ and } U=1\quad\;, \\
\underline{\sigma} & \text{if } t\ge T+\frac{\tau}{2} \text{ and } U=-1 
\end{cases} \label{eq:counterexample}
\eea
where $\underline{\sigma}<\sigma_0<\overline{\sigma}$ are three positive constants and $U$ denotes the result of a fair coin toss, independent of $\cf_T$ (e.g., known only at a time $t\in (T,T+\frac{\tau}{2}]$).

\begin{prop}
The stochastic volatility model in \eqref{eq:continuous_model_on_S}, with volatility process described in \eqref{eq:counterexample}, satisfies \eqref{eq:inversion_cvx_ordering}. In particular, VIX futures are strictly more expensive than in their associated local volatility model.
\end{prop}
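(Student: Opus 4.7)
The plan is to exploit the special structure of the model to make $\mathrm{VIX}^2_T$ almost surely constant, while showing that $\mathrm{VIX}^2_{\mathrm{loc},T}$ retains genuine randomness. Combined with \eqref{eq:same_mean}, a strict Jensen's inequality will then yield \eqref{eq:inversion_cvx_ordering} for every strictly convex $f$; taking $f(y)=-\sqrt{y}$ specializes to the VIX future statement.

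First, compute $\mathrm{VIX}^2_T$. By construction $(\sigma_t)_{t\in[T,T+\tau]}$ is independent of $\cf_T$: it equals the deterministic constant $\sigma_0$ on $[T,T+\tau/2)$, and equals $\overline{\sigma}$ or $\underline{\sigma}$ on $[T+\tau/2,T+\tau]$ according to $U$, which is independent of $\cf_T$. Hence $\EE[\sigma_t^2|\cf_T]=\sigma_0^2$ for $t<T+\tau/2$ and $\EE[\sigma_t^2|\cf_T]=\frac{1}{2}(\overline{\sigma}^2+\underline{\sigma}^2)$ for $t\ge T+\tau/2$, and integrating yields $\mathrm{VIX}^2_T = \frac{\sigma_0^2}{2} + \frac{\overline{\sigma}^2+\underline{\sigma}^2}{4}$, a deterministic constant. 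Next, derive the local volatility in closed form: on $[0,T+\tau/2)$, $\sigma_t=\sigma_0$ is deterministic so $\sigma_{\mathrm{loc}}(t,x)=\sigma_0$; for $t\ge T+\tau/2$, conditioning on $U=\pm 1$ makes $S_t$ log-normal with two distinct variance parameters, and Bayes' rule yields $\sigma_{\mathrm{loc}}^2(t,x)$ as a weighted average of $\overline{\sigma}^2$ and $\underline{\sigma}^2$ with weights proportional to the two conditional log-normal densities of $S_t$. Since $\overline{\sigma}\neq\underline{\sigma}$ these densities have different tail-decay rates, so the weight ratio varies non-trivially with $x$ and $\sigma_{\mathrm{loc}}^2(t,\cdot)$ is a non-constant function of $x$ for every $t\in(T+\tau/2,T+\tau]$.

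The crux is to upgrade this pointwise non-constancy to non-constancy of $\mathrm{VIX}^2_{\mathrm{loc},T}$ itself. Using the Markov property and the fact that $S^{\mathrm{loc}}$ is geometric Brownian motion with volatility $\sigma_0$ on $[0,T+\tau/2]$, one can write $\mathrm{VIX}^2_{\mathrm{loc},T} = g(S_T^{\mathrm{loc}})$ with
\[
g(x) \;=\; \frac{\sigma_0^2}{2} \;+\; \frac{1}{\tau}\int_{T+\tau/2}^{T+\tau}\EE\bigl[\sigma_{\mathrm{loc}}^2(t,S_t^{\mathrm{loc}})\,\big|\,S_T^{\mathrm{loc}}=x\bigr]\,dt,
\]
and the task becomes showing that $g$ is not constant on $(0,\infty)$. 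My plan is to analyze the tails: because $\overline{\sigma}>\underline{\sigma}$ the higher-variance branch dominates the Gaussian mixture in the extremes, so $\sigma_{\mathrm{loc}}^2(t,y)\to\overline{\sigma}^2$ as $y\to 0$ or $y\to\infty$, while $\sigma_{\mathrm{loc}}^2(t,y)<\overline{\sigma}^2$ strictly at any finite $y>0$. Since the local volatility is uniformly bounded by $\overline{\sigma}$, under $\{S_T^{\mathrm{loc}}=x\}$ the log-increment $\log(S_t^{\mathrm{loc}}/x)$ has uniformly bounded variance, so $S_t^{\mathrm{loc}}\to 0$ or $\infty$ in probability as $x\to 0$ or $\infty$, and dominated convergence gives $g(x)\to\frac{\sigma_0^2+\overline{\sigma}^2}{2}$ in both extremes; on the other hand $g(s_0)<\frac{\sigma_0^2+\overline{\sigma}^2}{2}$, since $\sigma_{\mathrm{loc}}^2(t,S_t^{\mathrm{loc}})<\overline{\sigma}^2$ on a set of positive probability under the law started at $s_0$. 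Hence $g$ is non-constant, $\mathrm{VIX}^2_{\mathrm{loc},T}$ has strictly positive variance, and strict Jensen's inequality together with \eqref{eq:same_mean} gives
\[
\EE\bigl[f(\mathrm{VIX}^2_{\mathrm{loc},T})\bigr] \;>\; f\bigl(\EE[\mathrm{VIX}^2_{\mathrm{loc},T}]\bigr) \;=\; f\bigl(\mathrm{VIX}^2_T\bigr) \;=\; \EE\bigl[f(\mathrm{VIX}^2_T)\bigr]
\]
for every strictly convex $f$.

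The main obstacle is this propagation step: the $x$-dependence of each $\sigma_{\mathrm{loc}}^2(t,\cdot)$ could in principle cancel out when averaged against the local-volatility transition kernel and integrated over $t$, so ruling out such a coincidence is the real content of the proof. The tail-asymptotics route sketched above is the cleanest, but an alternative would be to compute $\mathrm{Var}(\mathrm{VIX}^2_{\mathrm{loc},T})$ directly by integrating the closed-form expression for $\sigma_{\mathrm{loc}}$ against the Gaussian mixture and verifying strict positivity.
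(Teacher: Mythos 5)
Your proposal is correct and follows essentially the same route as the paper: show $\mathrm{VIX}^2_T$ is a.s.\ constant by independence, reduce via \eqref{eq:same_mean} and strict Jensen to the non-constancy of $\mathrm{VIX}^2_{\mathrm{loc},T}=\psi(S_T^{\mathrm{loc}})$, and establish that non-constancy by showing $\psi(x)<\ell:=\frac{1}{2}(\sigma_0^2+\overline{\sigma}^2)$ everywhere while $\psi(x)\to\ell$ in the tail (the paper proves the $x\to+\infty$ limit via a Chebyshev bound on the log-increment plus a uniform-in-$t$ estimate on $\overline{\sigma}^2-\sigma_{\mathrm{loc}}^2(t,\cdot)$ away from $t=T+\tau/2$, which is exactly the tightness-plus-dominated-convergence step you sketch). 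The only detail to make explicit is that the convergence $\sigma_{\mathrm{loc}}^2(t,\cdot)\to\overline{\sigma}^2$ degenerates as $t\downarrow T+\tau/2$, so one must either argue pointwise in $t$ and then dominate the $t$-integral, or excise a small time interval as the paper does.
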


\begin{proof}
Let us denote 
\beaa
\sigma_+(t) = \begin{cases}
\sigma_0 & \text{if } t<T+\frac{\tau}{2} \\
\overline{\sigma} & \text{if } t\ge T+\frac{\tau}{2} 
\end{cases}, \qquad
\sigma_-(t) = \begin{cases}
\sigma_0 & \text{if } t<T+\frac{\tau}{2} \\
\underline{\sigma} & \text{if } t\ge T+\frac{\tau}{2} 
\end{cases}, 
\eeaa
so that $\sigma_t$ is given by $\sigma_+(t)$ or $\sigma_-(t)$ depending on the coin toss $U$. Since $(\sigma_t)_{t\in[T,T+\tau]}$ is independent of $\cf_T$, $\mathrm{VIX}^2_T$ is a.s. constant:
\beaa
\mathrm{VIX}^2_T = \EE\left[ \frac{1}{\tau}\int_T^{T+\tau} \sigma_t^2 \, dt \right] = \frac{1}{2}\left(\sigma_0^2 + \frac{\overline{\sigma}^2+\underline{\sigma}^2}{2} \right).
\eeaa
Since this is also the mean of $\mathrm{VIX}_{\mathrm{loc},T}^2$, to prove  (\ref{eq:inversion_cvx_ordering}), it is enough to prove that $\mathrm{VIX}_{\mathrm{loc},T}^2$ is not a.s. constant.

Due to the very simple form of Model (\ref{eq:counterexample}), we know the local volatility in closed form:
\bea
\sigma_\text{loc}^2(t,x) = \frac{p_+(t,x)\sigma_+^2(t)+p_-(t,x)\sigma_-^2(t)}{p_+(t,x)+p_-(t,x)},
\label{eq:locvol_counterexample}
\eea
where $p_\pm(t,\cdot)$ is the density of the process $(S^\pm_t)_{t\ge 0}$ with dynamics $\frac{dS^\pm_t}{S^\pm_t} = \sigma_\pm(t) \, dW_t$, $S_0 = s_0$, i.e.,
\beaa
p_\pm(t,x) = \frac{1}{x\sqrt{2\pi\Sigma_\pm(t)}}\exp\left(-\frac{1}{2}\left( \frac{\ln \frac{x}{s_0}}{\sqrt{\Sigma_\pm(t)}} +\frac{1}{2}\sqrt{\Sigma_\pm(t)}\right)^2 \right), \qquad \Sigma_\pm(t) = \int_0^t \sigma_\pm^2(s)\, ds.
\eeaa
Figure \ref{fig:locvol2_counterexample} shows the shape of $\sigma_\text{loc}^2$. Note in particular that $\sigma_\text{loc}$ takes values in $(\underline{\sigma},\overline{\sigma})$ and that 
\bea
\forall t\in\left(T+\frac{\tau}{2},T\right], \qquad \lim_{x\rightarrow +\infty} \sigma_\text{loc}(t,x) = \overline{\sigma}.
\label{eq:sigma_loc_limit}
\eea

\begin{figure}
\begin{center}
\includegraphics[width=10cm,height=8cm]{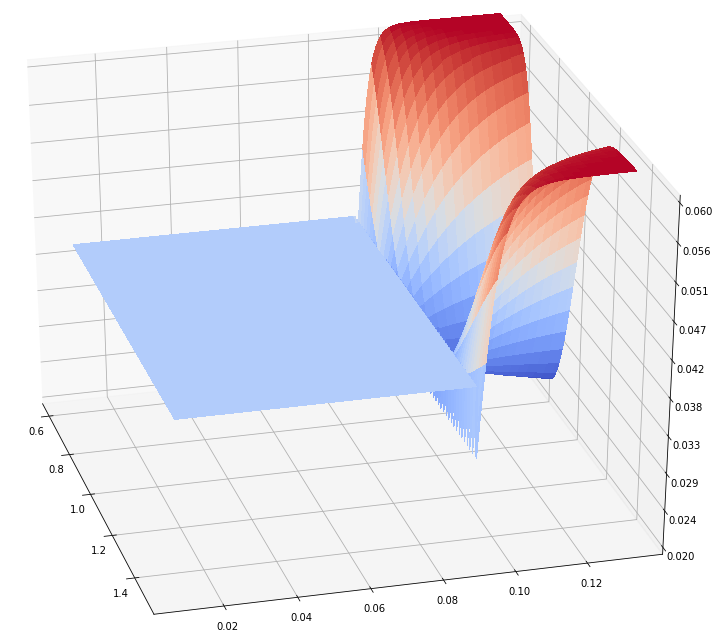}
\end{center}
\caption{Graph of $(t,x)\mapsto \sigma_\text{loc}^2(t,x)$ for $T=0.05$, $\sigma_0=0.2$, $\overline{\sigma}=0.25$, $\underline{\sigma}=0.02$, $s_0=1$}
\label{fig:locvol2_counterexample}
\end{figure}

Let us define
\bea
\psi(x) := \EE\left[\frac{1}{\tau}\int_T^{T+\tau} \sigma_{\mathrm{loc}}^2(t,S_t^{\mathrm{loc}}) \,dt \middle|S_T^{\mathrm{loc}}=x\right]
\label{eq:psi}
\eea
so that $\mathrm{VIX}_{\mathrm{loc},T}^2 = \psi\left(S_T^{\mathrm{loc}}\right)$. Note that 
\beaa
\forall x>0, \qquad \psi(x)< \ell := \frac{1}{2}\left(\sigma_0^2 + \overline{\sigma}^2 \right).
\eeaa
Since $S_T^{\mathrm{loc}}$ has support $\RR_+$, in order to prove that $\mathrm{VIX}_{\mathrm{loc},T}^2$ is not a.s. constant, it is enough to prove that $\psi$ tends to $\ell$ when $x$ tends to $+\infty$. This follows from the next lemma.
\end{proof}

\begin{lem}\label{prop:simple}
In Model (\ref{eq:counterexample}), the function $\psi$ defined by (\ref{eq:psi}) satisfies
\beaa
\lim_{x\rightarrow +\infty} \psi(x) = \ell.
\eeaa
\end{lem}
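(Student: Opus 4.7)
The plan is to split the time integral defining $\psi(x)$ at $T+\tau/2$. For $t\in[T,T+\tau/2)$ we have $\sigma_+(t)=\sigma_-(t)=\sigma_0$, so the closed form (\ref{eq:locvol_counterexample}) collapses to $\sigma_\mathrm{loc}^2(t,\cdot)\equiv\sigma_0^2$, contributing exactly $\sigma_0^2/2$ to $\psi(x)$ regardless of $x$. It therefore suffices to prove
\[
\frac{1}{\tau}\int_{T+\tau/2}^{T+\tau} \EE\bigl[\sigma_\mathrm{loc}^2(t,S_t^{\mathrm{loc}})\bigm|S_T^{\mathrm{loc}}=x\bigr]\,dt \;\xrightarrow[x\to\infty]{}\; \frac{\overline{\sigma}^2}{2}.
\]
Since $\sigma_\mathrm{loc}$ is uniformly bounded above by $\overline{\sigma}$, bounded convergence in $t$ reduces this to the pointwise claim that, for each fixed $t\in(T+\tau/2,T+\tau]$,
\[
\EE\bigl[\sigma_\mathrm{loc}^2(t,S_t^{\mathrm{loc}})\bigm|S_T^{\mathrm{loc}}=x\bigr]\;\longrightarrow\;\overline{\sigma}^2 \quad\text{as } x\to\infty.
\]

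For this pointwise claim I would combine the limit (\ref{eq:sigma_loc_limit}) with a tightness argument. By the Markov property of the local volatility diffusion, conditional on $S_T^{\mathrm{loc}}=x$ the process $(S_u^{\mathrm{loc}})_{u\ge T}$ restarts at $x$ and satisfies
\[
S_t^{\mathrm{loc}}=x\exp\!\left(\int_T^t\sigma_\mathrm{loc}(r,S_r^{\mathrm{loc}})\,dW_r-\tfrac{1}{2}\int_T^t\sigma_\mathrm{loc}^2(r,S_r^{\mathrm{loc}})\,dr\right).
\]
Because $\sigma_\mathrm{loc}\le\overline{\sigma}$, the stochastic integral has variance at most $\overline{\sigma}^2\tau$ and the It\^o drift lies in $[-\overline{\sigma}^2\tau/2,0]$, both bounds being uniform in $x$. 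Consequently the laws of $\log(S_t^{\mathrm{loc}}/x)$ form a tight family, which forces $S_t^{\mathrm{loc}}\to+\infty$ in probability as $x\to+\infty$. Plugging this convergence into (\ref{eq:sigma_loc_limit}) gives $\sigma_\mathrm{loc}^2(t,S_t^{\mathrm{loc}})\to\overline{\sigma}^2$ in probability, and a final bounded convergence argument yields the desired expectation limit. Altogether $\psi(x)\to \sigma_0^2/2+\overline{\sigma}^2/2=\ell$.

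The subtlest point I expect to have to address carefully is the validity of the stochastic representation above: $\sigma_\mathrm{loc}$ is defined only as a conditional expectation and is a priori merely measurable. However, on $[T,T+\tau]\setminus\{T+\tau/2\}$ the densities $p_\pm$ appearing in (\ref{eq:locvol_counterexample}) are smooth and strictly positive, so $\sigma_\mathrm{loc}$ is in fact continuous and bounded away from $0$ there, placing the conditional SDE squarely within the standard well-posedness and Markov theory. Everything else in the argument is a routine application of bounded and dominated convergence.
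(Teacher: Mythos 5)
Your proof is correct, and the essential ingredients coincide with the paper's: the split of the time integral at $T+\tfrac{\tau}{2}$, a uniform-in-$x$ second-moment (Chebyshev/tightness) bound on the log-price increment $L_t-L_T$ showing that $S_t^{\mathrm{loc}}$ escapes to $+\infty$ conditionally on $S_T^{\mathrm{loc}}=x$, and the limit (\ref{eq:sigma_loc_limit}) of the local volatility at large strikes. Where you genuinely diverge is in how the limit is assembled. The paper works directly with the time integral and performs a three-way decomposition $I_1+I_2+I_3$ (the event $\{L_t\le L_T-\Delta\}$, a thin time slice just after $T+\tfrac{\tau}{2}$, and the remainder), which forces it to establish that the convergence (\ref{eq:sigma_loc_limit}) is \emph{uniform} in $t$ on compact subsets of $(T+\tfrac{\tau}{2},T+\tau]$ — a claim it asserts as ``easy to check'' but does not prove. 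You instead fix $t$, prove the pointwise-in-$t$ convergence $\EE[\sigma_{\mathrm{loc}}^2(t,S_t^{\mathrm{loc}})\,|\,S_T^{\mathrm{loc}}=x]\to\overline{\sigma}^2$ by combining convergence in probability of $S_t^{\mathrm{loc}}$ with (\ref{eq:sigma_loc_limit}) and bounded convergence in $\omega$, and then apply bounded convergence once more in $t$ over the finite interval (the integrand being uniformly bounded by $\overline{\sigma}^2$). This soft double-limit argument buys you two things: you never need the uniformity of (\ref{eq:sigma_loc_limit}) in $t$, and you need no explicit quantitative $\varepsilon'$-bookkeeping; the price is that the argument is non-quantitative, whereas the paper's version yields an explicit threshold $x^*+\Delta$ beyond which $\ell-\psi(x)\le\varepsilon$ (a structure it then reuses almost verbatim in the proof of Proposition \ref{prop:gen}). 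Your closing remark on the measurability/regularity of $\sigma_{\mathrm{loc}}$, needed to justify the conditional SDE representation and the Markov property, is a point the paper glosses over and is correctly resolved by the explicit closed form (\ref{eq:locvol_counterexample}).
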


\begin{proof}
Note that $\psi(x) = \frac{1}{2} \left( \sigma_0^2 + \varphi(x)\right)$, where
\beaa
\varphi(x) := \EE\left[\frac{2}{\tau}\int_{T+\frac{\tau}{2}}^{T+\tau} \sigma_{\mathrm{loc}}^2(t,S_t^{\mathrm{loc}}) \,dt \middle|S_T^{\mathrm{loc}}=x\right],
\eeaa
so it is enough to prove that $\varphi(x)$ tends to $\overline{\sigma}^2$ when $x$ tends to $+\infty$. Let $\eps > 0$ and $\eps' := \left( 1 + 2\left(\overline{\sigma}^2 - \underline{\sigma}^2\right) \right)^{-1}\eps$. Let us denote $L_t := \ln(S^\text{loc}_t)$, whose dynamics is given by
\beaa
dL_t = -\frac{1}{2}\sigma_\text{loc}^2\left(t,e^{L_t}\right)\, dt + \sigma_\text{loc}\left(t,e^{L_t}\right)\, dW_t, \qquad L_0 = \ln s_0.
\eeaa
Since $\sigma_\text{loc}$ is bounded, it is easily checked that $c:=\sup_{t\in [T,T+\tau],x\in\RR}\EE[(L_t-L_T )^2|L_T=x]<+\infty$. Let $\Delta:= \sqrt{\frac{c}{\eps'}}$. Then
\bea
\forall t\in [T,T+\tau], \;\; \forall x\in\RR, \quad \PP(|L_t-L_T|\ge \Delta|L_T=x) \le \frac{\EE[(L_t-L_T )^2|L_T=x]}{\Delta^2}\le \frac{c}{\Delta^2} = \eps'. \label{eq:Delta}
\eea
We have
\beaa
\overline{\sigma}^2 - \varphi(e^x) = \frac{2}{\tau}\int_{T+\frac{\tau}{2}}^{T+\tau}\EE\left[ \left(\overline{\sigma}^2 - \sigma_{\mathrm{loc}}^2(t,e^{L_t})\right)\middle|L_T=x\right]dt  = I_1(x) + I_2(x) + I_3(x),
\eeaa
where
\beaa
I_1(x) &:=& \frac{2}{\tau}\int_{T+\frac{\tau}{2}}^{T+\tau} \EE\left[ \left(\overline{\sigma}^2 - \sigma_{\mathrm{loc}}^2(t,e^{L_t}) \ind_{L_t\le L_T-\Delta} \right)\middle|L_T=x\right] dt, \\
I_2(x) &:=& \frac{2}{\tau}\int_{T+\frac{\tau}{2}}^{T+\frac{\tau}{2}(1+\eps')} \EE\left[ \left(\overline{\sigma}^2 - \sigma_{\mathrm{loc}}^2(t,e^{L_t}) \ind_{L_t> L_T-\Delta} \right)\middle|L_T=x\right] dt, \\
I_3(x) &:=& \frac{2}{\tau}\int_{T+\frac{\tau}{2}(1+\eps')}^{T+\tau} \EE\left[ \left(\overline{\sigma}^2 - \sigma_{\mathrm{loc}}^2(t,e^{L_t}) \ind_{L_t> L_T-\Delta} \right)\middle|L_T=x\right] dt.
\eeaa
Recall that $\sigma_\text{loc}$ takes values in $(\underline{\sigma},\overline{\sigma})$. From (\ref{eq:Delta}), $0\le I_1(x)\le \left(\overline{\sigma}^2 - \underline{\sigma}^2\right)\eps'$ for all $x\in\RR$. Obviously, $0\le I_2(x)\le \left(\overline{\sigma}^2 - \underline{\sigma}^2\right)\eps'$ for all $x\in\RR$. Moreover, it is easy to check that the convergence (\ref{eq:sigma_loc_limit}) is uniform w.r.t. $t\in[T+\frac{\tau}{2}(1+\eps'),T+\tau]$: there exists $x^*$ such that
\beaa
\forall t\in\left[T+\frac{\tau}{2}(1+\eps'),T+\tau\right], \quad \forall x\ge x^*, \qquad 0\le \overline{\sigma}^2 - \sigma_{\mathrm{loc}}^2(t,e^x) \le \eps'.
\eeaa
As a consequence, for all $x\ge x^*+\Delta$, $0\le I_3(x) \le \eps'$. Finally,
\beaa
\forall x\ge x^*+\Delta, \qquad 0\le \overline{\sigma}^2 - \varphi(e^x) \le \left( 1 + 2 \left(\overline{\sigma}^2 - \underline{\sigma}^2\right) \right)\eps' = \eps.
\eeaa
We have thus proved that $\varphi(e^x)$, hence $\varphi(x)$, tends to $\overline{\sigma}^2$ when $x$ tends to $+\infty$.
\end{proof}

\begin{rem}
Note that, if we fix $t_1\in(0,\tau)$ and define
\beaa
\sigma_t = \begin{cases}
\sigma_0 & \text{if } t<t_1 \\
\overline{\sigma} & \text{if } t\ge t_1 \text{ and } U=1\quad\;,\\
\underline{\sigma} & \text{if } t\ge t_1 \text{ and } U=-1 
\end{cases}
\eeaa
{with $U$ only known at time $t_1$,} then we have built a model where the inversion of convex ordering holds for every short maturity $T\in(0,t_1)$.
\end{rem}

\section{Generalization}\label{sec:generalization}
In this section, we generalize the example presented in Section~\ref{sec:simple}, to show that the desired inversion of convex ordering can be obtained with a more interesting structure for the volatility. 
We fix $0<t_1<\tau<t_2:=t_1+\tau$, and define a c\`adl\`ag process $\sigma$ on $[0,t_2)$, which is independent of $\cf^W$, the filtration generated by $W$. 
We start by setting $\sigma_t$ constant equal to $\sigma_0>0$ for $t\in[0,t_1)$. This ensures that $\cf_T=\cf^{S,W}_T=\cf^{W}_T$, and as a consequence $(\sigma_t)_{t\in[T,T+\tau]}$ is independent of $\cf_T$ for all $0<T< t_1$, thus $\mathrm{VIX}_T^2$ is constant:
\[
\mathrm{VIX}_T^2=\EE\left[\frac{1}{\tau}\int_T^{T+\tau}\sigma_t^2 dt\right].
\] 
We shall now define $\sigma$ in $[t_1,t_2)$, with the aim of having $\mathrm{VIX}_{\mathrm{loc},T}^2$ not constant. 
Let $0<\underline{v}\leq\sigma_0\leq\overline{v}$, and $(\sigma_t)_{t\in[t_1,t_2)}$ take values in $[\underline{v},\overline{v}]$.
We denote by $\Lambda$ the law of $(\sigma_t)_{t\in[0,t_2)}$ on $\cd=\cd[0,t_2)$, the space of c\`adl\`ag functions on $[0,t_2)$. 
Note that, for $\Lambda=\frac12(\delta_{\sigma_+}+\delta_{\sigma_-})$ and $T=t_1-\tau/2$, we recover the example of Section \ref{sec:simple}.

For every path $g\in\cd$, we denote by $S^g$ the evolution of the stock price for this realization of $\sigma$, that is
\[
\frac{dS^g_t}{S^g_t}=g(t) \,dW_t,\quad 0\leq t<t_2,\quad S_0^g=s_0,
\]
and by $p_g(t,.)$ the density of the process $S^g_t$, that is
\[
p_g(t,x) = \frac{1}{x\sqrt{2\pi\Sigma_g(t)}}\exp\left(-\frac{1}{2}\left( \frac{\ln \frac{x}{s_0}}{\sqrt{\Sigma_g(t)}} +\frac{1}{2}\sqrt{\Sigma_g(t)}\right)^2 \right), \qquad \Sigma_g(t) = \int_0^t g(s)^2\, ds.
\]
The local volatility then takes the form
\begin{equation}\label{sigma_loc_f}
\sigma_\text{loc}^2(t,x) =\int_{\cd} g(t)^2 q_g(t,x) \,d\Lambda(g),\quad t\in[0,t_2),
\end{equation}
where
\[
q_g(t,x)=\frac{p_g(t,x)}{\int_{\cd} p_h(t,x) \,d\Lambda(h)}.
\]

\begin{lem}
For all $t\ge 0$, the following limit holds for the local volatility:
\begin{equation}\label{eq_lim_sigma_f}
\lim_{x\to+\infty}\sigma_\mathrm{loc}^2(t,x) =\frac{1}{\Lambda(A^{[t]})}\int_{A^{[t]}} g(t)^2 \,d\Lambda(g)=:\overline{\sigma}(t)^2,
\end{equation}
where
\begin{equation}\label{eq:At}
A^{[t]}:=\{g\in\cd : \Sigma_g(t)=\Lambda\text{-}\esssup_{h\in\cd}\Sigma_h(t)\}.
\end{equation}
\end{lem}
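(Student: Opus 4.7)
The plan is to leverage the explicit lognormal form of $p_g(t,x)$ to identify, via Laplace-type asymptotics in the log-spot $y:=\ln(x/s_0)$, which paths dominate the representation \eqref{sigma_loc_f} as $x\to+\infty$. Expanding the square in the exponent of $p_g(t,x)$ yields
\[
p_g(t,x)=\frac{1}{x\sqrt{2\pi\Sigma_g(t)}}\exp\!\left(-\frac{y^2}{2\Sigma_g(t)}-\frac{y}{2}-\frac{\Sigma_g(t)}{8}\right),
\]
so the Gaussian decay rate in $y$ is $1/(2\Sigma_g(t))$, which is smallest precisely on $A^{[t]}$. Setting $\Sigma^*(t):=\Lambda\text{-}\esssup_{h\in\cd}\Sigma_h(t)$ and factoring the $g$-independent quantity $\tfrac{1}{x}\exp(-\tfrac{y^2}{2\Sigma^*(t)}-\tfrac{y}{2})$ out of both integrals in \eqref{sigma_loc_f}, I would rewrite $\sigma_{\mathrm{loc}}^2(t,x)$ as $\int g(t)^2\tilde p_g(t,x)\,d\Lambda(g)\big/\int\tilde p_h(t,x)\,d\Lambda(h)$, where
\[
\tilde p_g(t,x):=\frac{1}{\sqrt{2\pi\Sigma_g(t)}}\exp\!\left(-\frac{y^2}{2}\!\left(\frac{1}{\Sigma_g(t)}-\frac{1}{\Sigma^*(t)}\right)-\frac{\Sigma_g(t)}{8}\right).
\]

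The next step is to verify two elementary facts. Pointwise as $x\to+\infty$: on $A^{[t]}$ the coefficient of $y^2$ vanishes, so $\tilde p_g(t,x)$ is identically the positive constant $c(t):=(2\pi\Sigma^*(t))^{-1/2}\exp(-\Sigma^*(t)/8)$; off $A^{[t]}$ one has $\Sigma_g(t)<\Sigma^*(t)$ $\Lambda$-a.e., the coefficient of $y^2$ is strictly positive, and $\tilde p_g(t,x)\to 0$. Uniformly in $g$ and $y\geq 0$, the exponent is non-positive, so $\tilde p_g(t,x)\leq(2\pi\underline\Sigma(t))^{-1/2}$ with $\underline\Sigma(t):=\sigma_0^2\min(t,t_1)+\underline v^2(t-t_1)_+>0$; together with $g(t)^2\leq\overline v^2$, this dominates the integrand in the numerator as well. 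Dominated convergence then gives
\[
\int\tilde p_g(t,x)\,d\Lambda(g)\to c(t)\Lambda(A^{[t]}),\qquad\int g(t)^2\tilde p_g(t,x)\,d\Lambda(g)\to c(t)\!\int_{A^{[t]}}\!g(t)^2\,d\Lambda(g),
\]
and taking the ratio yields the claimed limit $\overline\sigma(t)^2$. The case $t<t_1$ is trivial, since $\sigma$ is then deterministic and equal to $\sigma_0$, so $\sigma_{\mathrm{loc}}(t,\cdot)\equiv\sigma_0=\overline\sigma(t)$.

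The only genuinely non-routine point is choosing the rescaling factor $\Sigma^*(t)$ so that $\tilde p_g$ simultaneously admits a nontrivial pointwise limit and a $g$-uniform dominating envelope; once this is identified, the argument is a clean application of dominated convergence, relying on the boundedness of $\sigma$ to keep $\Sigma_g(t)$ inside a compact subset of $(0,\infty)$ uniformly in $g$. The implicit standing hypothesis is $\Lambda(A^{[t]})>0$, which is necessary for the right-hand side of \eqref{eq_lim_sigma_f} to be well defined; otherwise a finer asymptotic analysis of the distribution of $\Sigma_g(t)$ near its essential supremum would be required.
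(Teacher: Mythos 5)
Your proof is correct and follows essentially the same route as the paper: the paper computes $\lim_{x\to\infty}q_g(t,x)$ by analyzing the ratio $F(g,h,t,x)$ (which is exactly $\tilde p_h/\tilde p_g$ in your notation) via Fatou and dominated convergence, and the sign of $\frac{1}{\Sigma_h(t)}-\frac{1}{\Sigma_g(t)}$ plays the same role as your coefficient of $y^2$. Your normalization by the $g$-independent dominant factor, applying dominated convergence to numerator and denominator separately, is a slightly cleaner bookkeeping of the same Laplace-type asymptotics, and your remark that $\Lambda(A^{[t]})>0$ is an implicit standing hypothesis applies equally to the paper's argument (it is automatic in the discrete setting of Proposition \ref{prop:gen} where the lemma is used).
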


\begin{proof}
To study the limit of $\sigma_\text{loc}^2(t,x)$ for $x\to+\infty$, thanks to \eqref{sigma_loc_f} and dominated convergence, we are reduced to consider the limit of $q_g(t,x)$.
Note that
\begin{equation}\label{eq_qf}
q_g(t,x)^{-1}=\int_{\cd}F(g,h,t,x)\,d\Lambda(h),
\end{equation}
where
\[
\textstyle{F(g,h,t,x):= \sqrt{\frac{\Sigma_g(t)}{\Sigma_h(t)}} \exp\left\{-\frac{1}{2}\left[ 
\left(\ln \frac{x}{s_0}\right)^2\left(\frac{1}{\Sigma_h(t)}-\frac{1}{\Sigma_g(t)}\right) +\frac{1}{4}(\Sigma_h(t)-\Sigma_g(t))
\right]\right\}.}
\]
By Fatou's lemma, $\lim_{x\to+\infty}\int_{\cd}F(g,h,t,x)\,d\Lambda(h)=+\infty$ as soon as $\Lambda(\cd_{g,t})>0$,
where
\[
\cd_{g,t}:=\{h\in\cd : \Sigma_h(t)>\Sigma_g(t)\},
\]
which in turn implies $\lim_{x\to+\infty}q_g(t,x)=0$. 
On the other hand, if $\Lambda(\cd_{g,t})=0$, then by dominated convergence $\lim_{x\to+\infty}\int_{\cd}F(g,h,t,x)\,d\Lambda(h)=\int_{\cd}\lim_{x\to+\infty}F(g,h,t,x)\,d\Lambda(h)$, being $\sigma$ bounded and bounded away from zero. 
Now $\lim_{x\to+\infty}F(g,h,t,x)$ equals zero when $\Sigma_h(t)<\Sigma_g(t)$, and one when $\Sigma_h(t)=\Sigma_g(t)$.
This concludes the proof, noticing that
\[
A^{[t]}=\{g\in\cd : \Lambda(\cd_{g,t})=0\}.
\]
\end{proof}

\begin{prop}\label{prop:gen}
Consider the stochastic volatility model \eqref{eq:continuous_model_on_S}. Let $\sigma$ be constant equal to $\sigma_0>0$ in $[0,t_1)$, independent of $\cf^W$ in $[t_1,t_2)$, admitting only finitely many paths, so that
\begin{equation}\label{eq lambda discrete}
\Lambda=\sum_{n=1}^N u_n\delta_{g_n},\quad \text{with}\; N\ge 2, \; g_n\in\cd, \; u_n > 0, \; \textstyle{\sum_{n=1}^Nu_n=1},
\end{equation}
with $g_n(t)=\sigma_0$ for all $n\in\{1,\ldots,N\}$, $t\in[0,t_1)$. We assume the following non-degeneracy condition of $\sigma$ in a neighborhood of $t_1$: There exist $m,n\in\{1,\ldots,N\}$ and $\varepsilon>0$ such that $g_m(t) \neq g_n(t)$ for all $t \in [t_1,t_1+\varepsilon]$. Then, for all maturities $T<t_1$, \eqref{eq:inversion_cvx_ordering} holds. In particular, VIX futures are strictly more expensive than in the associated local volatility model.
\end{prop}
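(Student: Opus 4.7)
The plan follows the template of Section~\ref{sec:simple}. Since $\sigma\equiv\sigma_0$ on $[0,t_1)$, for $T<t_1$ one has $\cf_T=\cf^W_T$, while $(\sigma_t)_{t\in[t_1,t_2)}$ is independent of $\cf^W$ by assumption and $[T,T+\tau]\subset[0,t_2)$ (because $T+\tau<t_1+\tau=t_2$). Hence $(\sigma_t)_{t\in[T,T+\tau]}$ is independent of $\cf_T$, so $\mathrm{VIX}_T^2$ is a.s.\ constant equal to $C:=\frac{1}{\tau}\int_T^{T+\tau}\EE[\sigma_t^2]\,dt$. By \eqref{eq:same_mean}, $\EE[\mathrm{VIX}_{\mathrm{loc},T}^2]=C$ as well, and strict Jensen's inequality then reduces the conclusion \eqref{eq:inversion_cvx_ordering} to showing that $\mathrm{VIX}_{\mathrm{loc},T}^2=\psi(S_T^{\mathrm{loc}})$, with $\psi$ defined as in \eqref{eq:psi}, is not a.s.\ constant. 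Since $S_T^{\mathrm{loc}}$ has full support in $(0,\infty)$, it suffices to show that $\lim_{x\to+\infty}\psi(x)$ exists and is strictly greater than $C$.

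For the identification of this limit I would replay the argument of Lemma~\ref{prop:simple}: the preceding lemma supplies the pointwise convergence $\sigma_{\mathrm{loc}}^2(t,x)\to\overline{\sigma}(t)^2$ as $x\to\infty$; combined with the Chebyshev bound \eqref{eq:Delta} (still applicable because $\sigma_{\mathrm{loc}}$ takes values in $[\underline v,\overline v]$ by \eqref{sigma_loc_f}) and dominated convergence across the time integral in \eqref{eq:psi}, this yields
\[
\lim_{x\to+\infty}\psi(x)=\frac{1}{\tau}\int_T^{T+\tau}\overline{\sigma}(t)^2\,dt.
\]
The only new feature compared to Lemma~\ref{prop:simple} is the discontinuity of $t\mapsto\overline{\sigma}(t)^2$ at $t_1$, where $A^{[t]}$ jumps from $\cd$ to a proper subset; a single time is harmless inside a Lebesgue integral, and non-uniformity of the convergence in a small neighborhood of $t_1$ is absorbed by the uniform bound $\underline v^2\le\sigma_{\mathrm{loc}}^2\le\overline v^2$, exactly as in the treatment of $I_2$ in Lemma~\ref{prop:simple}.

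The strict inequality $\frac{1}{\tau}\int_T^{T+\tau}\overline{\sigma}(t)^2\,dt>C$ is where the non-degeneracy assumption does its work, and I expect this step to be the main obstacle. The plan is to compare the two integrands on a small right neighborhood of $t_1$, which lies in $[T,T+\tau]$ because $T<t_1<\tau\le T+\tau$. Set $V:=\max_k g_k(t_1)^2$ and $I^*:=\{k:g_k(t_1)^2=V\}$. The non-degeneracy condition forces $g_m(t_1)\neq g_n(t_1)$ and, since both values lie in $[\underline v,\overline v]$ with $\underline v>0$, also $g_m(t_1)^2\neq g_n(t_1)^2$; hence $I^*\subsetneq\{1,\ldots,N\}$ and $\alpha:=\sum_{k\notin I^*}u_k>0$. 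By right-continuity of the c\`adl\`ag paths, the expansion $\int_{t_1}^t g_n(s)^2\,ds=g_n(t_1)^2(t-t_1)+o(t-t_1)$ forces the argmax set of $n\mapsto\Sigma_{g_n}(t)$ to lie inside $I^*$ for all $t\in(t_1,t_1+\delta)$ for some $\delta>0$. On this interval $\overline{\sigma}(t)^2\to V$ as $t\downarrow t_1$ while $\EE[\sigma_t^2]\to\sum_k u_k g_k(t_1)^2<V$, producing a uniform positive gap on a set of positive Lebesgue measure and hence the desired strict inequality.
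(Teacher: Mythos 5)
Your reduction agrees with the paper's up to one decisive point, and that point is where the proof breaks. Both you and the paper reduce the claim to showing that $\mathrm{VIX}_{\mathrm{loc},T}^2=\psi(S_T^{\mathrm{loc}})$ is not a.s.\ constant, and both identify $\lim_{x\to+\infty}\psi(x)=\ell:=\frac{1}{\tau}\int_T^{T+\tau}\overline{\sigma}(t)^2\,dt$. The paper then concludes from the two facts ``$\psi(x)<\ell$ for every finite $x$'' and ``$\psi(x)\to\ell$''; it never compares $\ell$ with the mean $C$. You instead want to conclude from ``$\psi(x)\to\ell$'' and ``$\ell>C$''. That alternative route is legitimate in principle, but your argument for $\ell>C$ is invalid, and in fact $\ell>C$ does not follow from the hypotheses. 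The local analysis at $t_1$ is fine: for some $\delta,\gamma>0$ you get $\overline{\sigma}(t)^2-\EE[\sigma_t^2]\ge\gamma$ on $(t_1,t_1+\delta)$. But $\ell-C=\frac{1}{\tau}\int_{t_1}^{T+\tau}\left(\overline{\sigma}(t)^2-\EE[\sigma_t^2]\right)dt$, and the integrand has no sign on the rest of the interval: $\overline{\sigma}(t)^2$ averages $g_n(t)^2$ only over the paths with maximal \emph{cumulative} variance $\Sigma_n(t)$, and such a path may have small \emph{instantaneous} variance at $t$ (a ``high then low'' path keeps the largest $\Sigma_n$ for a while after switching to the low regime, so there $\overline{\sigma}(t)^2<\EE[\sigma_t^2]$). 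A positive gap on a subinterval therefore does not give the strict inequality for the integral. Concretely, since for a.e.\ $t$ every $n\in A^{[t]}$ satisfies $g_n(t)^2=\frac{d}{dt}\max_m\Sigma_m(t)$, one has $\int_{t_1}^{T+\tau}\overline{\sigma}(t)^2\,dt=\max_n\int_{t_1}^{T+\tau}g_n(s)^2\,ds$, while $\int_{t_1}^{T+\tau}\EE[\sigma_t^2]\,dt=\sum_n u_n\int_{t_1}^{T+\tau}g_n(s)^2\,ds$; hence $\ell=C$ whenever all paths have the same total integrated variance over $[t_1,T+\tau]$. This is compatible with the non-degeneracy condition: take $N=2$, $u_1=u_2=\frac{1}{2}$, $s^*$ the midpoint of $[t_1,T+\tau]$, $g_1=a$ on $[t_1,s^*)$ and $b$ on $[s^*,t_2)$, $g_2=b$ on $[t_1,s^*)$ and $a$ on $[s^*,t_2)$, $a\neq b$. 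Then $g_1\neq g_2$ on $[t_1,t_1+\varepsilon]$ yet $\ell=C$, so your criterion ``$\lim\psi>C$'' cannot detect non-constancy in a case the proposition is meant to cover. To close the gap you would need the paper's pointwise bound $\psi(x)<\ell$ for all $x$, not the averaged bound $\ell>C$.

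A secondary, reparable point: the convergence $\psi(x)\to\ell$ is not merely Lemma~\ref{prop:simple} plus ``one harmless discontinuity at $t_1$''. The set $A^{[t]}$, hence $\overline{\sigma}(t)^2$, can change across countably many intervals of $[t_1,t_2)$, and the Chebyshev step \eqref{eq:Delta} only helps if $\sigma_{\mathrm{loc}}^2(t,\cdot)\to\overline{\sigma}(t)^2$ uniformly in $t$ on each such interval away from its endpoints. Establishing that uniformity is the bulk of the paper's proof (the decomposition into intervals $I_k$ and the estimates on the weights $q_n$); invoking ``dominated convergence across the time integral'' does not by itself deliver it.
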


With an abuse of notation, below we will write $q_n, \Sigma_n, F(n,m,t,x)$ instead of $q_{g_n}, \Sigma_{g_n}, F(g_n,g_m,t,x)$, to ease readability.

\begin{proof}
As in the example of Section \ref{sec:simple}, we consider the function
\[
\psi(x) := \EE\left[\frac{1}{\tau}\int_T^{T+\tau} \sigma_{\mathrm{loc}}^2(t,S_t^{\mathrm{loc}}) \,dt \middle|S_T^{\mathrm{loc}}=x\right]
\]
and note that our non-degeneracy assumption implies that
\[
\forall x>0,\quad \psi(x) < \frac{1}{\tau}\int_T^{T+\tau} 
 \overline{\sigma}(t)^2\,dt=:\ell.
\]
To prove the inversion of convex ordering for all maturities $T<t_1$, we will show that $\lim_{x\to+\infty}\psi(x)=\ell$, that is,
\begin{equation}\label{eq liml}
\int_{t_1}^{T+\tau}\EE[\sigma_{\mathrm{loc}}^2(t,S_t^{\mathrm{loc}}) | S_T^{\mathrm{loc}}=x]\ dt \;\;
\xrightarrow[x \to +\infty]{}\;\;
\int_{t_1}^{T+\tau}\overline{\sigma}(t)^2\ dt.
\end{equation}

Since $\Lambda$ is discrete and the functions $t\mapsto\Sigma_n(t)$ are continuous and bounded in $[t_1,t_2)$, this interval divides in countably many intervals $I_k=[a_k,b_k)$, $k\in\NN$, $a_k<b_k$, such that, in each open interval $(a_k,b_k)$, the function $\overline{\sigma}$ defined in \eqref{eq_lim_sigma_f} coincides with one or more paths of $\sigma$. To be more precise, for every $k\in\NN$, the sets $A^{[t]}$ defined in \eqref{eq:At} coincide for every $t\in(a_k,b_k)$, say to a set $A^k$, and
\begin{equation}\label{eq sn}
\overline{\sigma}(t)=g_n(t)\;\; \text{for}\; t\in(a_k,b_k),\quad \text{for all}\; g_n\in A^k.
\end{equation}

To show the convergence in \eqref{eq liml}, we split the interval $[t_1,T+\tau]$ into subintervals $\widetilde I_k:=[t_1,T+\tau]\cap I_k$, $k\in\NN$. Since by dominated convergence
\begin{equation*}
\lim_{x\to+\infty}\sum_{k\in\NN}\int_{\widetilde I_k}\EE[\overline{\sigma}(t)^2-\sigma_{\mathrm{loc}}^2(t,S_t^{\mathrm{loc}}) | S_T^{\mathrm{loc}}=x]\ dt=\sum_{k\in\NN}\lim_{x\to+\infty}\int_{\widetilde I_k}\EE[\overline{\sigma}(t)^2-\sigma_{\mathrm{loc}}^2(t,S_t^{\mathrm{loc}}) | S_T^{\mathrm{loc}}=x]\ dt,
\end{equation*}
we are reduced to prove that for all $k\in\NN$,
\begin{equation}\label{eq inv}
\lim_{x\to+\infty}\int_{\widetilde I_k}\EE[\overline{\sigma}(t)^2-\sigma_{\mathrm{loc}}^2(t,S_t^{\mathrm{loc}}) | S_T^{\mathrm{loc}}=x]\ dt=0.
\end{equation}

Fix $k\in\NN$ and $\eps_k>0$, and set $\varepsilon_k':=\min\{\varepsilon_k(b_k-a_k+3(\bar{v}^2-\underline{v}^2))^{-1},(b_k-a_k)/3\}$. We split the interval $I_k$ into three subintervals
\begin{equation}\label{eq subint}
J'_k:=[a_k, a_k+\varepsilon'_k],\quad
J_k:=(a_k+\varepsilon'_k,b_k-\varepsilon'_k),\quad J''_k:=[b_k-\varepsilon'_k,b_k),
\end{equation}
and we are going to show that $\sigma_\text{loc}^2(t,x)$ converges uniformly to $\overline{\sigma}(t)^2$ w.r.t. $t\in J_k$, for $x\to+\infty$.

Let $N^k:=\{n\in\{1,...,N\} : g_n\in A^k\}$ and note that the function $F(n,m,t,x)$ depends on the paths $g_n$ and $g_m$ only through $\Sigma_n(t)$ and $\Sigma_m(t)$, respectively. Therefore, from \eqref{eq_qf}, we have
\begin{equation}\label{eq q1}
\frac{1}{q_n(t,x)}=\sum_{m=1}^N F(n,m,t,x) u_m = F(n,m_k,t,x)\Lambda(A^k)+\sum_{m\not\in N^k} F(n,m,t,x) u_m,\quad t\in J_k,
\end{equation}
for any $m_k\in N^k$, which reduces to $\Lambda(A^k)+\sum_{m\not\in N^k}F(n,m,t,x) u_m$ for $n\in N^k$.
Now, it is easy to verify that, when $x$ tends to $+\infty$, $F(n,m,t,x)$ converges to zero uniformly w.r.t. $t\in J_k$ whenever $n\in N^k$ and $m\not\in N^k$. In particular, there is $x_k$ such that, for all $x\geq x_k$, $t\in J_k$, and $n\in N^k$, $\sum_{m\not\in N^k}F(n,m,t,x) u_m\leq\varepsilon_k'\Lambda(A^k)^2\bar{v}^{-2}$, thus
\begin{equation}\label{eq qin}
\left|q_n(t,x)-\frac{1}{\Lambda(A^k)}\right|=
\frac{\sum_{m\not\in N^k}F(n,m,t,x) u_m}{\Lambda(A^k)(\Lambda(A^k)+\sum_{m\not\in N^k}F(n,m,t,x) u_m)}
\leq\varepsilon_k'\bar{v}^{-2}.
\end{equation}

Since $F(n,m,t,x)=F(m,n,t,x)^{-1}$, we also have that, when $x$ tends to $+\infty$, $F(n,m,t,x)$ converges to $+\infty$ uniformly w.r.t. $t\in J_k$ whenever $n\not\in N^k$ and $m\in N^k$. This gives the existence of $y_k$ such that, for all $x\geq y_k$, $t\in J_k$, $n\not\in N^k$, and $m\in N^k$,
\[
F(n,m,t,x)\geq\bar{v}^2(\varepsilon_k'\Lambda(A^k))^{-1},
\]
which by \eqref{eq q1} implies
\begin{equation}\label{eq qnotin}
q_n(t,x)\leq \varepsilon_k'\bar{v}^{-2}.
\end{equation}
Note that in the present setting we have
\[
\sigma_\text{loc}^2(t,x)=\sum_{n=1}^N u_n g_n(t)^2 q_n(t,x)\qquad\text{and}\qquad
\overline{\sigma}(t)^2=\frac{1}{\Lambda(A^k)}\sum_{n\in N^k}u_ng_n(t)^2, \quad t\in J_k,
\]
from \eqref{eq sn}.
Now \eqref{eq qin} and \eqref{eq qnotin} imply 
\begin{equation}\label{eq unifb}
|\sigma_\text{loc}^2(t,x) -\overline{\sigma}(t)^2|\leq\varepsilon_k',\quad \text{for all $x\geq z_k:=\max\{x_k,y_k\}$ and $t\in J_k$},
\end{equation}
which shows the claimed uniform convergence.

As in the proof of Lemma~\ref{prop:simple}, we consider the log-price process $L_t=\ln(S^\text{loc}_t)$, and we have
$c_k:=\sup_{t\in I_k,x\in\RR}\EE[(L_t-L_T )^2|L_T=x]<+\infty$, $k\in\NN$, since $\sigma$ is bounded. Setting $\Delta_k:= \sqrt{{c_k}/{\eps_k'}}$, we again obtain 
\begin{equation}\label{eq cheb}
\PP(|L_t-L_T|\ge \Delta_k|L_T=x) \le \eps_k',\quad \text{for all $t\in I_k$ and $x\in\RR$}.
\end{equation}

We are going to show that
\[
\int_{\widetilde I_k}\left(\overline{\sigma}(t)^2-\EE[\sigma_{\mathrm{loc}}^2(t,S_t^{\mathrm{loc}}) | S_T^{\mathrm{loc}}=e^x]\right) dt = \int_{\widetilde I_k}\EE[\overline{\sigma}(t)^2-\sigma_{\mathrm{loc}}^2(t,e^{L_t}) | L_T=x] \,dt
\]
converges to zero for $x\to+\infty$, by proving that for $x$ big enough this is smaller than the arbitrarily chosen $\eps_k$.
This in turn implies \eqref{eq inv}, being $k\in\NN$ arbitrary, and concludes the proof of \eqref{eq liml}.
In order to do that, we divide $\widetilde I_k$ in three subintervals :
\[
\widetilde{J}_k:=[t_1,T+\tau]\cap J_k,\quad
\widetilde{J}'_k:=[t_1,T+\tau]\cap J'_k,\quad
\widetilde{J}''_k:=[t_1,T+\tau]\cap J''_k,\quad k\in\NN,
\]
where we used the notation introduced in \eqref{eq subint}.
Note that, since $\sigma$ takes values in $[\underline{v},\overline{v}]$, 
\[
\int_{\widetilde{J}'_k}\EE[\overline{\sigma}(t)^2-\sigma_{\mathrm{loc}}^2(t,e^{L_t}) | L_T=x]\ dt\leq (\overline{v}^2-\underline{v}^2)\eps'_k,
\]
and the same bound holds when taking the integral over $\widetilde{J}''_k$.
On the other hand, \eqref{eq cheb} implies
\[
\int_{\widetilde{J}_k}\EE[(\overline{\sigma}(t)^2-\sigma_{\mathrm{loc}}^2(t,e^{L_t}))\ind_{L_t\le L_T-\Delta_k} | L_T=x]\ dt\leq (\overline{v}^2-\underline{v}^2)\eps'_k,
\]
and \eqref{eq unifb} implies
\[
\int_{\widetilde{J}_k}\EE[(\overline{\sigma}(t)^2-\sigma_{\mathrm{loc}}^2(t,e^{L_t}))\ind_{L_t> L_T-\Delta_k} | L_T=x]\ dt\leq \eps'_k|\widetilde{J}_k|,
\]
for all $x\geq \ln(z_k)+\Delta_k$.
Altogether, for $x\geq \ln(z_k)+\Delta_k$ we have
\[
\int_{\widetilde{I}_k}\EE[\overline{\sigma}(t)^2-\sigma_{\mathrm{loc}}^2(t,e^{L_t}) | L_T=x]\ dt\leq (b_k-a_k+3(\overline{v}^2-\underline{v}^2))\eps'_k\leq\eps_k.
\]
This concludes the proof.
\end{proof}

\section{Term-structure of convex ordering}\label{sec:term-structure}
In this section, we extend the model built in Section~\ref{sec:generalization} in order to have the convex ordering preserved for long maturities, as suggested by market data. To this end, we set
\bea \label{eq:SLV}
\frac{dS_t}{S_t}= \sigma_0\frac{Y}{\sqrt{\EE[Y^2 | S_t]}} \, dW_t,\qquad t\geq t_2,
\eea
where $\sigma_0\in\RR_+$ and $Y$ is a Bernoulli random variable known in $t_2$ and independent of anything else. Say $Y$ takes the value $y_-$ with probability $q_-$ and $y_+$ with probability $q_+=1-q_-$, for some $0<y_-<y_+$ and $0<q_-<1$. By Jourdain and Zhou \cite{jourdain}, as long as the ratio $y_+/y_-$ is not too large, the stochastic differential equation (SDE) \eqref{eq:SLV} admits a weak solution $(\Omega,(\cf_t),\PP,W,(S_t),Y)$, which may not be unique. In the following we use the subscript or superscript $\PP$ to emphasize that a priori the corresponding quantities depend on the weak solution of \eqref{eq:SLV}.

Note that (\ref{eq:SLV}) implies that, whatever the weak solution, $\sigma_{\mathrm{loc},\PP}^2(t,S_t)=\sigma_0^2$ for $t\geq t_2$. Therefore $\mathrm{VIX}_{\mathrm{loc},T}^2$ does not depend on the weak solution and is constant equal to $\sigma_0^2$ for all maturities $T\geq t_2$. We now want to show that, on the other hand, for any weak solution of \eqref{eq:SLV}, this is not true for $\mathrm{VIX}_{\PP,T}^2$. This will imply that $\mathrm{VIX}_{\mathrm{loc},T}^2$ is strictly smaller than $\mathrm{VIX}_{\PP,T}^2$ in convex order for $T\geq t_2$, thus there is no inversion of convex ordering for long maturities.

For any weak solution of \eqref{eq:SLV}, we set
\beaa
F_\PP(s,t,x):=\EE^\PP[Y^2|S_{t_2}=s,S_t=x].
\eeaa
Since $Y$ is independent of $W$, the conditional law of $(S_t)_{t\geq t_2}$ given $Y=y_\pm$ and $S_{t_2}=s$ under $\PP$ agrees with the (unique) law of a weak solution to the SDE
\[
\frac{dS_t^{s,\pm,\PP}}{S^{s,\pm,\PP}_t}=\sigma_0\frac{y_\pm}{\sqrt{F_\PP\left(s,t,S^{s,\pm,\PP}_t\right)}} \, d\widetilde{W}_t,\qquad t\geq t_2,\qquad S^{s,\pm,\PP}_{t_2}=s,
\]
living possibly on a different probability space (the weak uniqueness of the solution follows from \cite[Theorem~3]{veretennikov}, given that \cite[Proposition~5.1]{brunick} ensures the existence of a measurable version of $F_\PP(s,\cdot,\cdot)$).
Being $F_\PP$ bounded and bounded away from zero, we deduce that, for $t>t_2$, the conditional law of $S_t$ given $Y=y_\pm$ and $S_{t_2}=s$ under $\PP$ admits a density $p^\PP_\pm(s,t,x)$, and that $p^\PP_\pm(s,t,x)>0$ for all $x\in\RR_+$, which in turn implies that 
\bea \label{eq range F}
F_\PP(s,t,x)=\frac{q_-y^2_-p^\PP_-(s,t,x)+q_+y^2_+p^\PP_+(s,t,x)}{q_-p^\PP_-(s,t,x)+q_+p^\PP_+(s,t,x)}\in(y^2_-,y^2_+),\qquad t> t_2.
\eea
Then, for $T\geq t_2$, we have
\[
\mathrm{VIX}^2_{\PP,T} = \sigma_0^2Y^2\frac{1}{\tau}\int_T^{T+\tau} \EE^\PP\left[ \frac{1}{F_\PP(S_{t_2},t,S_t)}\middle| \cf_T \right] dt=: \sigma_0^2Y^2 \Psi_\PP.
\]
Now, having $\mathrm{VIX}^2_{\PP,T}$ constant (thus necessarily equal to $\sigma_0^2$) corresponds to having $Y^2 \Psi_\PP\equiv 1$, which is not possible given that
$\Psi_\PP$ takes values in $\left(\frac{1}{y_+^2},\frac{1}{y_-^2}\right)$, by \eqref{eq range F}. This shows that  $\mathrm{VIX}^2_{\PP,T}$ cannot be constant for any $T\geq t_2$.

\begin{rem}
To the best of our knowledge, uniqueness of a weak solution of \eqref{eq:SLV} is still an open question. More generally, partial results on the existence of a weak solution of a calibrated stochastic local volatility (SLV) model of the form
\bea
\frac{dS_t}{S_t}= \sigma_\mathrm{Dup}(t,S_t)\frac{f(Y_t)}{\sqrt{\EE[f(Y_t)^2 | S_t]}} \, dW_t 
\label{eq:SLV_general}
\eea
have been obtained in \cite{abergel,jourdain}, but uniqueness has not been addressed. Note that Lacker \emph{et al}.\ \cite{lsz} have recently proved the weak existence and uniqueness of a \emph{stationary} solution of a similar nonlinear SDE with drift, under some conditions. However, their result does not apply to the calibration of SLV models. Indeed, market-implied risk neutral distributions $(\cl(S_t))_{t\ge 0}$ are strictly increasing in convex order and therefore no stationary solution $(S_t,Y_t)_{t\ge 0}$ can be a calibrated SLV model.

The possible absence of uniqueness of a weak solution of \eqref{eq:SLV} or \eqref{eq:SLV_general} is problematic, not only theoretically but also practically. It means that the price of a derivative in the calibrated SLV model may not be well defined. For example, in our case, the VIX may depend on $\PP$. More generally, existence and uniqueness of \eqref{eq:SLV_general} for general processes $(Y_t)_{t\ge 0}$ such as It\^o processes remain a very challenging, open problem, despite the fact that these models are widely used in the financial industry, in particular thanks to the particle method of Guyon and Henry-Labord\`ere \cite{jg-phl-particle}.
\end{rem}

\bigskip

{\bf Acknowledgements.} We would like to thank Bruno Dupire and Vlad Bally for interesting discussions and helpful comments.

\end{document}